\theoremstyle{plain}
\newtheorem{proposition}{Proposition}
\newtheorem{lemma}{Lemma}
\newtheorem{theorem}{Theorem}
\newtheorem{remark}{Remark}
\def\bmd{{\bm d}}
\def\bmg{{\bm g}}
\def\bmm{{\bm m}}
\def\bmeta{{\bm \eta}}
\def\bmsigma{{\bm \sigma}}
\def\bmpartial{{\bm \partial}}
\newcommand{\raisemath}[1]{\mathpalette{\raisem@th{#1}}}
\newcommand{\raisem@th}[3]{\raisebox{#1}{$#2#3$}}
\NewDocumentCommand{\newrbar}{O{0pt} O{0pt}}{
  \ensuremath{\mathrlap{\raisemath{#2}{\hspace*{#1}{\mathchar'26\mkern-9mu}}}r}}
\newcounter{mnotecount}%[section]
\newcommand{\mnotex}[1]%{}
{\protect{\stepcounter{mnotecount}}$^{\mbox{\footnotesize $\bullet$\themnotecount}}$ 
\marginpar{%\color{red}%
\raggedright\scriptsize\em
$\!\!\!\!\!\!\,\bullet$\themnotecount: #1} }
\newcounter{mnote}
\begin{document}

\title{\textbf{Asymptotics of spin-0 fields and conserved charges on $n$-dimensional Minkowski spaces}}

\author[2]{Edgar Gasper\'{i}n \footnote{E-mail address: {\tt edgar.gasperin@tecnico.ulisboa.pt}}}

\author[1,3]{Mariem Magdy \footnote{E-mail
    address: {\tt mariem.magdy.a.m@tecnico.ulisboa.pt}}}

\author[1,4]{Filipe C. Mena \footnote{E-mail address: {\tt filipecmena@tecnico.ulisboa.pt}}}

\affil[1]{Centro de An\'{a}lise Matem\'{a}tica, Geometria e Sistemas Din\^{a}micos, Instituto Superior T\'{e}cnico IST, Universidade de Lisboa UL, Avenida Rovisco Pais 1, 1049-001 Lisboa, Portugal.}

\affil[2]{CENTRA, Departamento de F\'{i}sica, Instituto Superior T\'{e}cnico IST, Universidade de Lisboa UL, Avenida Rovisco Pais 1, 1049-001 Lisboa, Portugal.}

\affil[3]{Erwin Schr\"{o}dinger International Institute for Mathematics and Physics, Boltzmanngasse 9 A-1090 Vienna, Austria.}

\affil[4]{Centro de Matemática, Universidade do Minho, 4710-057 Braga, Portugal.}

\maketitle
\begin{abstract}
We use conformal geometry methods and the construction of Friedrich's cylinder at spatial infinity to study the propagation of spin-$0$ fields (solutions to the wave equation) on $n$-dimensional Minkowski spacetimes in a neighbourhood of spatial and null infinity.  We obtain formal solutions written in terms of series expansions close to spatial and null infinity and use them to compute non-trivial asymptotic spin-$0$ charges. It is shown that if one examines the most general initial data within the class considered in this paper, the expansion is polyhomogeneous and hence of restricted regularity at null infinity. Furthermore, we derive the conditions on the initial data needed to obtain well-defined limits for the asymptotic charges at the critical sets where null infinity and spatial infinity meet. In even dimensions, we find that there are infinitely many well-defined asymptotic charges at the critical sets, while for odd dimensions there exist no non-trivial asymptotic charges with well-defined limits at the critical sets. 
\\\\
Keywords: Conformal geometry; Wave equation; Asymptotic expansions; Asymptotic charges

\end{abstract}
\newpage
\tableofcontents
\newpage
%%%%%%%%%%%%%%%%%%%%%%%%%%%%%%%%%%%%%%%%%%%%%%%%%%%%%%%%%%
\section{Introduction}
%%%%%%%%%%%%%%%%%%
%%% MOTIVATION
%%%%%%%%%%%%%%%%%%
The conformal Einstein field equations of Friedrich and, more generally, conformal methods constitute a powerful set of tools in General Relativity that allows the use of local methods to study the global geometric structure of spacetime \cite{JVKbook}. Those methods have been utilised mainly in the proof of the non-linear stability of certain classes of spacetimes \cite{Alho-Mena-Kroon, Fri-EMYM, Fri-dust, Joudioux, LueVal12b, LueVal10, Luebbe-Mena}. Historically, the first application of these methods was the global and semi-global non-linear stability of the de-Sitter and Minkowski spacetime in \cite{Fri-dS}, respectively. The adjective \emph{semi-global} for the Minkowski spacetime result is used to emphasise that the initial data, in this case, was not given on a Cauchy hypersurface —which terminates at spatial infinity $i^0$— but rather on a hyperboloidal hypersurface —which terminates at null infinity $\mathscr{I}$. 

Interestingly, the use of hyperboloidal hypersurfaces is becoming an increasingly popular approach in Numerical Relativity to reach null infinity even without using the conformal Einstein field equations \cite{Zen08, BarSarBuc11, VanHusHil14, HilHarBug16, VanHus17, GasHil18}. Although using hyperboloidal hypersurfaces is a valid strategy to obtain semi-global results, the main obstacle in obtaining a fully global non-linear stability result for the Minkowski spacetime —say, as that of \cite{ChrKla93, LinRod04}— using conformal methods is localised on a small neighbourhood of spatial infinity. The reason is that the conformal structure of asymptotically flat spacetimes with non-vanishing mass becomes degenerate at $i^0$ as observed since the decade of 1960 by Penrose \cite{Pen65a}. This is the so-called \emph{problem at spatial infinity}. The problem at $i^0$ has several consequences on the gravitational field such as the “violation” of the peeling theorem as evidenced by formal polyhomogeneous asymptotic expansions obtained using conformal methods in \cite{GasVal17a, Fri98a}. Nonetheless, a milestone in the resolution of the problem at $i^0$ has been the construction of extended representations of $i^0$ or “blow-ups”, for instance, Friedrich’s cylinder at spatial infinity and Ashtekar’s hyperboloid at spatial infinity \cite{Ash1,Ash2}. As the relation between these two representations of spatial infinity has been discussed \cite{MagVal21}, we will focus only on Friedrich’s representation. 

The original construction of the cylinder at $i^0$ for general asymptotically flat spacetimes can be traced back to \cite{Fri98a} where expansions of various field quantities are obtained around the region of spacetime where null infinity meets spacelike infinity. This allowed determining the behaviour of various fields at null infinity while connecting the coefficients in the expansion with the initial data for the fields’ evolution. An iterative procedure introduced in \cite{Fri98a} establishes conditions for the existence and regularity of those quantities. These conditions ensure that there are no logarithmic terms in the asymptotic solutions up to a given order \cite{Val04a, Val04d} in the non-linear case or to arbitrary order in the case of linear fields \cite{GasVal21, Val03a}.

\medskip

%%%%%%%%%%%%%%%%%%%%%%%%%%%%%%%%%%%%%%%%%%
%% PREVIOUS RESULTS (applications of conformal methods)
%%%%%%%%%%%%%%%%%%%%%%%%%%%%%%%%%%%%%%%%%%
A particularly timely application of these methods is the calculation of asymptotic charges. The prime example is the computation of the gravitational Newman-Penrose constants in \cite{Fried-bondi}. More recently, an investigation of the Newman-Penrose constants for the spin-1, spin-2, and spin-$0$ fields in the Minkowski spacetime has been given in \cite{GasVal20} and \cite{Gasperin-Pinto}. Naturally, the study of asymptotic conserved charges has a long history in General Relativity, preceding, of course, the construction of the cylinder at $i^0$, going back at least to the works of Komar \cite{Komar} and Arnowitt, Deser, and Misner \cite{Arnowitt} who linked the existence of local symmetry transformations at infinity with conserved quantities. Later, in a seminal work, Bondi, van der Burg, Metzner \cite{BMS}, and Sachs \cite{Sachs} found an infinite number of charges associated with asymptotic symmetries which satisfy an algebra generalising the Poincaré algebra (nowadays called BMS algebra). 

\medskip 

Given the importance of the wave equation for metric formulations of General Relativity such as the harmonic gauge formulation \cite{Cho52, LinRod04} and other field theories, a natural problem to consider is the existence of asymptotically regular spin-$0$ fields and their associated asymptotic charges. There is a vast literature on the topic of asymptotic charges \cite{HennTro18,HennTro18-2,HennTro18-3,WaldZoupas2000,BarTro16,CamEyer17,CapNgyPari23,NguyenWest22,HennTroe19,NguyenWest22-2,PraShe19,Pra18,Pra19,PraShe22} with notable work \cite{NguyenWest22,NguyenWest22-2,CamCoiMiz18,CamFreHopfSoni19} focusing on the asymptotic charges associated with spin-$0$ fields. The primary distinction between previous work and our analysis is that we exploit Friedrich's cylinder at spatial infinity to study the asymptotic regularity of spin-$0$ fields given by solutions to the wave equation on $n$-dimensional Minkowski spacetimes. In this context, the conformal spin-0 fields are required to exhibit at least $C^{0}$-regularity near null infinity to ensure well-defined limits of the asymptotic charges. Nevertheless, we will see that more stringent initial data conditions allow our solutions to extend smoothly to null infinity, although smooth extensions are not explored in all cases --- see Remark \ref{Non-integer-nu-regularity}. Therefore, we start with an ansatz in the form of a Taylor-like expansion of the field close to $i^0$ and use the obtained solutions to compute the asymptotic charges at the critical sets where null infinity and spatial infinity meet. A similar strategy was used to study the asymptotic charges associated with spin-$1$ and spin-$2$ fields propagating close to spatial and null infinity on a Minkowski background \cite{MagdyKroon22}. More recently, this was generalised to the non-linear gravitational field \cite{MagVal24}. This particular strategy restricts the initial data to be analytic, and it is shown that general initial data within the class considered here will lead to polyhomogeneous expansions. These can be interpreted as the linear spin-$0$ counterpart of the peeling violations of \cite{GasVal17a}. 

Due to the use of an ansatz, our solution is only \emph{formal} in the sense that we do not make a full PDE analysis to address the convergence of the solution expressed as an infinite series expansion — although each partial sum is an \emph{exact solution} for some initial data. Issues of convergence and PDE energy estimates will be addressed elsewhere. Despite these caveats, the approach presents several advantages: it provides a framework where it is easier to obtain the asymptotic solutions to the wave equation, it allows a direct connection between those solutions and the initial data, and, in this way, it allows obtaining a global picture of the solution. Moreover, our conformal solutions can be related to the physical solutions of the wave equation and the logarithmic terms that appear in our setting can be seen in some examples. In this way, the terms that may compromise the conformal regularity are controlled directly in terms of the initial data.
\medskip
%%%%%%%%%%%%%
%% \\\\
%% WHAT WE FIND
%% \\\\
%%%%%%%%%%%%%%%

In particular, we find initial data conditions that implies asymptotic solutions to the wave equation that extend smoothly to null infinity on $n$-dimensional Minkowski backgrounds. In the 4-dimensional case, we find that there are infinitely many asymptotic conserved charges associated with the spin-0 field. For higher dimensions, we find only a finite number of non-trivial asymptotic solutions with well-defined limits at null infinity. The existence of these \emph{non-trivial} charges crucially depends on the existence of integer solutions to an algebraic equation. We note that our results are independent from the Einstein field equations of General Relativity in the sense that the background is flat and hence it can be considered as a simple model for any field theory whose evolution equations are wave equations and admit some notion of asymptotic flatness compatible with the construction of the cylinder at $i^0$.
%%%%%%%%%%%%%%%%%%%%%%%%%%%%%%%%
%% \\\\
%% HOW THE PAPER IS ORGANIZED
%% \\\\
 %%%%%%%%%%%%%%%%%%%%%

 \medskip
 
The paper is organised as follows: In Section \ref{conformal}, we
reconstruct Friedrich's cylinder on the $n$-dimensional Minkowski
spacetime. In Section \ref{asymp-sec}, we establish conditions for the
existence of $C^{s}$-regular ($s \geq 0$) solutions to the conformal wave equation on the
Minkowski background, with supplementary material in Appendix
\ref{Appendix:Solutions-Legendre-ODE}. As those conditions depend on the spacetime dimension, this section is split into two subsections. Finally, Section \ref{charges-sec} contains an application of the results of the previous section to the computation of asymptotically conserved charges in our setting.
%%%%%%%%%%%%%%%%%%%%%%%%%%%%%%%%%%%%%%%%%%%%%%%%%%%%%%%%%%
\section{Conformal compactification and the Friedrich cylinder}
\label{conformal}
%%%%%%%%%%%%%%%%%%%%%%%%%%%%%%%%%%%%%%%%%%%%%%%%%%%%%%%%%%
This section introduces Friedrich's cylinder on an $n$-dimensional
Minkowski space which is a generalisation of the original
four-dimensional version \cite{Fried-15}.
%%%%%%%%%%%%%%%%%%%%%%%%%%%%%%%%%%%%%%%%%%%%
\subsection{Preliminaries}
Given an $n$-dimensional Lorentzian space $(\tilde{\mathcal{M}},\tilde{\bmg})$, which we call {\em physical space}, we introduce the conformal Lorentzian space $(\mathcal{M},\bmg)$, called {\em unphysical space}, so that $\bmg$
and $\tilde{\bmg}$ are related by
\begin{equation}
    \bmg = \Xi^{2} \tilde{\bmg},
    \label{Conformal-transofrmation}
\end{equation}
where $\Xi$ is a positive function commonly referred to as the
\emph{conformal factor}. This conformal transformation is induced by a diffeomorphic map $\Phi: \tilde{\mathcal{M}} \to \mathcal{M}$ such that $\Phi^{*} \bmg = \Xi^{2} \tilde{\bmg}$, where we used $\bmg$ as a shorthand notation for $\Phi^{*} \bmg$ in
Eq. \eqref{Conformal-transofrmation}. The conformal transformation
$\Phi$ defines a conformal compactification of $\tilde{\mathcal{M}}$
if $\Xi$ satisfies:
\begin{enumerate}
        \item[i.] $\Xi > 0$ on a compact, connected and open subset
          $\mathcal{U}$ of $\mathcal{M}$,
        \item[ii.] $\Xi =0$ on the boundary $\partial \mathcal{U}$ of the open set $\mathcal{U}$.
\end{enumerate}
Then, one can say that $\partial \mathcal{U}$ is the conformal
boundary of $\tilde{\mathcal{M}}$.
The conformal map also transforms fields and operators on
$\tilde{\mathcal{M}}$ into $\mathcal{M}$ and a non-trivial problem is to prove the regularity of those fields at the conformal boundary.  In this paper, we assume that $\tilde{\mathcal{M}}$ is the $n$-dimensional Minkowski space.

We are interested in establishing regular solutions to the wave equation in a neighbourhood where past and future null conformal infinity meet with spatial infinity. To do that, we use Friedrich's description of spatial infinity in terms of a cylinder.

%%%%%%%%%%%%%%%%%%%%%%%%%%%%%%%%%%%%%%%%%%%%%%%%%
\subsection{Friedrich's cylinder on the $n$-dimensional Minkowski space}
%%%%%%%%%%%%%%%%%%%%%%%%%%%%%%%%%%%%%%%%%%%%%%%%%
 Consider the Minkowski metric $\tilde{\bmeta}$ in spherical
 coordinates $(\tilde{t}, \tilde{r},\tilde{\theta}^{\mathcal{A}})$
\begin{equation*}
    \tilde{\bmeta} = - \bmd{\tilde{t}} \otimes \bmd{\tilde{t}} +
    \bmd{\tilde{r}} \otimes \bmd{\tilde{r}} + \tilde{r}^2 \bmsigma,
    \qquad \tilde{t} \in (- \infty, \infty), \quad \tilde{r} \in
           [0,\infty),
\end{equation*}
where $\bmsigma$ is the standard metric on $\mathbb{S}^{n-2}$ with
some choice of spherical polar coordinates
$(\tilde{\theta}^{\mathcal{A}})$, with $\mathcal{A} =
1,2,...,n-2$. Then, we introduce the coordinate transformation
$(\tilde{t},\tilde{r},\tilde{\theta}^{\mathcal{A}}) \to (t,\rho,
\theta^{\mathcal{A}})$ given by
\begin{equation}
    t = \frac{\tilde{t}}{\tilde{r}^{2}-\tilde{t}^2}, \qquad \rho =
    \frac{\tilde{r}}{\tilde{r}^2-\tilde{t}^2}, \qquad
    \theta^{\mathcal{A}} = \tilde{\theta}^{\mathcal{A}},
    \label{Coordinate-transformation}
\end{equation}
so that the spatial infinity point $i^{0} (\tilde{r}\to \infty)$
correspond to the origin $(t=0,\rho=0)$ in the new coordinate
system. In terms of $(t,\rho, \theta^{\mathcal{A}})$, we can write
$\tilde{\bmeta}$ as
\begin{equation*}
  \tilde{\bmeta} = \frac{1}{(\rho^2-t^2)^2} \left( -\bmd{t} \otimes
  \bmd{t} + \bmd{\rho}\otimes \bmd{\rho} + \rho^2 \bmsigma \right),
  \qquad t \in (-\infty, \infty), \qquad \rho \in [0,\infty).
\end{equation*}
Note that the above defines the conformal metric $\hat{\bmeta}= \Xi^2
\tilde{\bmeta}$ with $\Xi = \rho^2 - t^2$ so that
\begin{equation*}
    \hat{\bmeta} = -\bmd{t} \otimes \bmd{t} + \bmd{\rho}\otimes
    \bmd{\rho} + \rho^2 \bmsigma.
\end{equation*}
Let us introduce $\tau = t/ \rho$ and define the conformal factor
\begin{equation*}
    \Theta \equiv \frac{\Xi}{\rho} = \rho (1-\tau^2),
\end{equation*}
so that the conformal metric
\begin{equation}
    \bmeta = \Theta^2 \tilde{\bmeta}
    \label{physical-to-unphysical-metric}
\end{equation}
is given by
\begin{equation}
  \bmeta = - \bmd{\tau} \otimes \bmd{\tau} + \frac{(1-\tau^2)}{\rho^2}
  \bmd{\rho} \otimes \bmd{\rho} - \frac{\tau}{\rho} \left( \bmd{\tau}
  \otimes \bmd{\rho} + \bmd{\rho} \otimes \bmd{\tau} \right) +
  \bmsigma.
    \label{Friedrich-metric}
\end{equation}
Given the above, consider the conformal extension $(\mathcal{M},
\bmeta)$ where $\bmeta$ is given by Eq. \eqref{Friedrich-metric} and
\begin{equation*}
    \mathcal{M} = \{ p \in \mathbb{R}^n \ | \ -1 \leq \tau(p)
    \leq 1, \ \rho(p) \geq 0 \}.
\end{equation*}
Then introduce the following subsets of the conformal boundary
$\Theta=0$:
\begin{subequations}
    \begin{align*}
        \mathscr{I}^\pm & =\big\{ p\in  \mathcal{M} \hspace{1mm}\rvert \hspace{1mm} \tau(p) =\pm 1
        \big\}, \qquad & \text{past and future null infinity}
        \\ \mathcal{I} & = \big\{ p \in \mathcal{M} \hspace{1mm}  \rvert \hspace{1mm} |\tau(p)|<1, \hspace{1mm} \rho(p)=0\big\},
        \qquad & \text{the cylinder at spatial infinity}
        \\ \mathcal{I}^{\pm} & = \big\{ p\in  \mathcal{M} \hspace{1mm} \rvert \hspace{1mm} \tau(p)= \pm
        1, \hspace{1mm} \rho(p)=0 \big\}, \qquad & \text{the critical    sets of null infinity}
    \end{align*}
\end{subequations}
and 
%$\mathcal{I}^{0}$ 
\begin{equation*}
    \mathcal{I}^{0} \equiv \big\{ p \in \mathcal{M}\hspace{1mm} \rvert \hspace{1mm} \tau(p)=0, \hspace{1mm} \rho(p)=0\big\}
\end{equation*}
denoting the intersection of $\mathcal{I}$ with the initial hypersurface 
%$\mathcal{S}_{\star} \equiv \{ \tau=0 \}$
\begin{equation}
\mathcal{S}_{\star} = \{  p\in  \mathcal{M} \hspace{1mm}\rvert  \hspace{1mm} \tau(p) =0 \}.
\end{equation}
Notice that this corresponds to the physical Cauchy hypersurface $\tilde{t}=0$
for any $\tilde{r} \neq 0$.

In this conformal representation, the spatial infinity point $i^{0}$ corresponds to the set $\mathcal{I}$ with the topology $\mathbb{R} \times \mathbb{S}^{n-2}$, hence the name \emph{cylinder at spatial infinity}. The construction of the cylinder at spatial infinity is then based on the blow-up of a point $q\in \mathcal{S}_\star$  to a codimension-$1$ sphere. So given that $\mathcal{S}_\star$ is an $n-1$ Riemannian manifold, then the blow-up of $q$ is a $n-2$ dimensional sphere.

\begin{figure}[t]
\centering
\includegraphics[width=150mm]{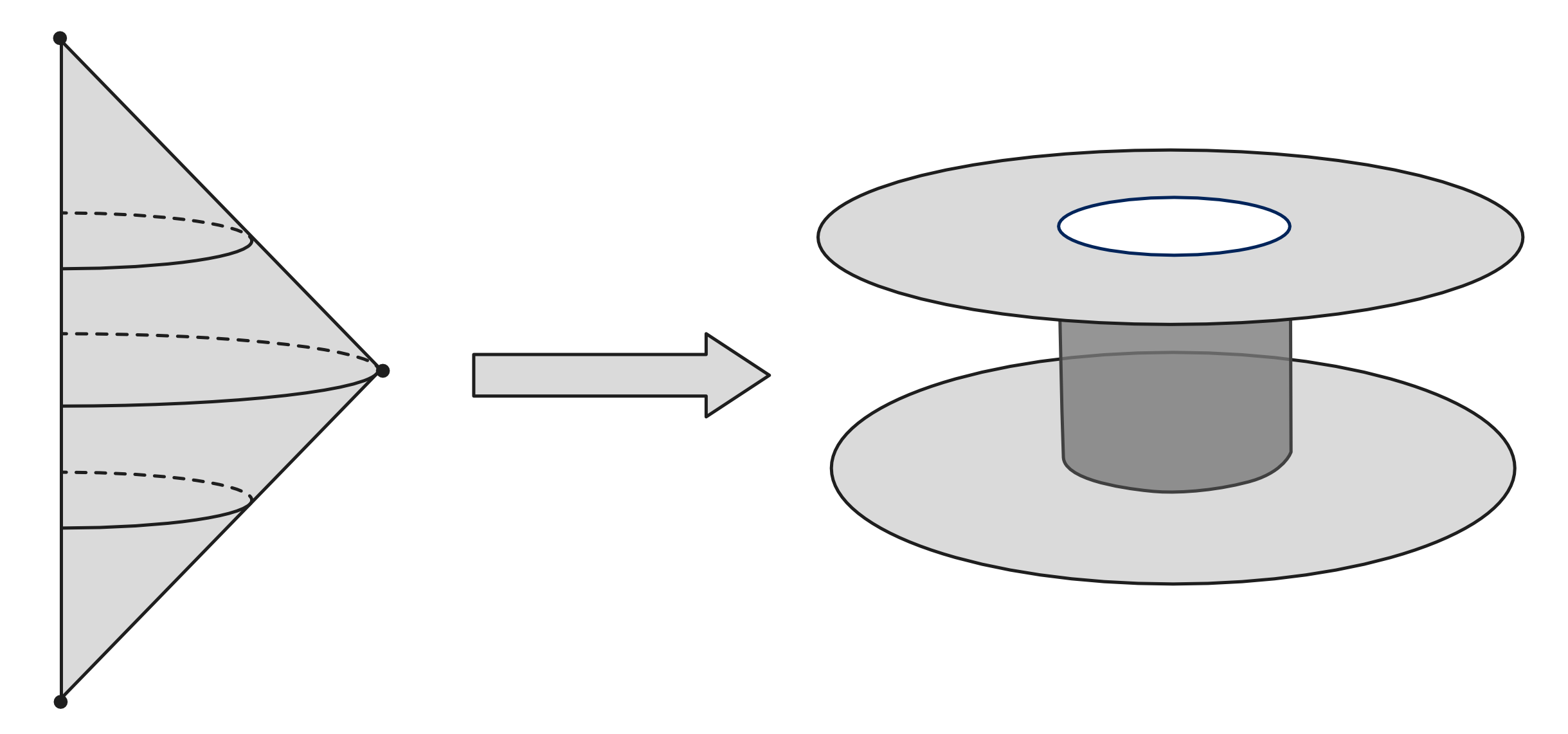}
\put(-175,135){$\mathscr{I}^+$}
\put(-110,90){$\mathcal{I}$}
\put(-268,110){$\Phi$}
\put(-175,70){$\mathscr{I}^-$}
\put(-363,145){$\mathscr{I}^+$}
\put(-318,96){$i^{0}$}
\put(-365,45){$\mathscr{I}^-$}
\caption{A schematic representation of the conformal transformation map $\Phi: \hat{\bmeta} \to \bmeta$. In particular, the diagram illustrates the blow-up of $i^{0} \cong \mathbb{S}^{n-2}$ to Friedrich's cylinder $\mathcal{I} \cong \mathbb{R} \times \mathbb{S}^{n-2}$ that connects $\mathscr{I}^{+}$ and $\mathscr{I}^{-}$.}
\label{Fig:point-compactification-to-Friedrich-cylinder}
\end{figure}

%%%%%%%%%%%%%%%%%%%%%%%%%%%%%%%%%%%%%%%%%%%%%%%%%%%%%%%%%%%%
\section{Asymptotic solutions to the wave equation in $n$-dimensions}
\label{asymp-sec}
%%%%%%%%%%%%%%%%%%%%%%%%%%%%%%%%%%%%%%%%%%%%%%%%%%%%%%%%%%%%%
The evolution of a spin-$0$ field on Minkowski spacetime is governed by a scalar field satisfying the wave equation
\begin{equation}
    \tilde{\square}_{\tilde\bmeta} \tilde{\phi} = \frac{1}{\sqrt{- \text{det} \tilde{\bmeta}}} \bmpartial_{\mu} (\sqrt{- \text{det} \tilde{\bmeta}} \bmpartial^{\mu} \phi) = 0,
    \label{spin-$0$-field-equation}
\end{equation}
where $\tilde{\square}_{\tilde\bmeta}  = \tilde{\eta}^{ab} \tilde{\nabla}_{a} \tilde{\nabla}_{b}$ is the wave operator relative to the metric  $\tilde{\bmeta}$. 

It is known that given Eq.~\eqref{physical-to-unphysical-metric}, the conformal transformation of the wave operator $\tilde{\square}_{\tilde\bmeta}$ implies
\begin{equation}
   \left(g^{\mu\nu}\nabla_\mu\nabla_\nu-\frac{n-2}{4(n-1)} R\right)\left( \Theta^{1-n/2} \tilde\phi\right)=\Theta^{-1-n/2}\left( \tilde g^{\mu\nu}\tilde\nabla_\mu\tilde\nabla_\nu-\frac{n-2}{4(n-1)} \tilde R  \right)\tilde\phi,
\end{equation}
and for $\tilde R=0$, this simply gives
\begin{equation}
\square_\bmeta \phi -\frac{n-2}{4(n-1)} R \phi =0,
 \label{unphysical-spin-$0$-equation}
\end{equation}
where $\phi = \Theta^{1-n/2} \tilde{\phi}$ and $R = (n-4) (n-1)$.

In terms of Friedrich's coordinates $(\tau, \rho, \theta^{\mathcal{A}})$, Eq. \eqref{unphysical-spin-$0$-equation} can be written as
\begin{equation}
    (1-\tau^2) \bmpartial_{\tau}^{2}\phi  + 2 \rho \tau \bmpartial_{\rho} \bmpartial_{\tau} \phi - 2 \tau \bmpartial_{\tau} \phi - \rho^2 \bmpartial_{\rho}^{2} \phi - \Delta_{\mathbb{S}^{n-2}} \phi  -\frac{n-2}{4(n-1)} R \phi=0,
    \label{unphysical-spin-$0$-equation-expanded}
\end{equation}
where $\Delta_{\mathbb{S}^{n-2}}$ is the Laplacian operator on $\mathbb{S}^{n-2}$. This is the main equation we will analyse in this work. 

Following the analysis of the spin-$0$ field equation in \cite{Gasperin-Pinto} and similar analyses \cite{MinucciPanossoKroon22,MagdyKroon22} in four-dimensions, we consider the following ansatz
\begin{equation}
    \phi (\tau,\rho,\theta^A)= \sum_{p=0}^{\infty} \sum_{\ell= 0}^{\infty}  \sum_{\bmm}  \frac{1}{p!} a_{p; \ell,\bmm }(\tau) Y_{\ell,\bmm} \rho^{p},
    \label{Expansion-phi}
\end{equation}
where $Y_{\ell,\bmm}=Y_{\ell,\bmm}(\theta^{\mathcal{A}})$ are the standard spherical harmonics on $\mathbb{S}^{n-2}$ with multi-index $\bmm=(\ell_1,\ell_2,\ldots,\ell_{n-3}$) using the notation of \cite{Avery}. Notice here that this ansatz is more general than the one considered in \cite{Gasperin-Pinto, MinucciPanossoKroon22} since $\ell$ is allowed to be greater than $p$. As we shall see in the sequel, this is required to obtain non-trivial asymptotic charges.

Consistent with this expansion, the initial data for $\phi$ on $\mathcal{S}_{\star}$ can be written as
\begin{equation}
\aligned
\label{initial-data}
&& \phi|_{\mathcal{S}_{\star}} (\rho,\theta^A)=\sum_{p=0}^{\infty} \sum_{\ell= 0}^{\infty}  \sum_{\bmm}  \frac{1}{p!} a_{p;\ell,\bmm}(0) Y_{\ell,\bmm} \rho^{p} \\
&& \dot{\phi}|_{\mathcal{S}_{\star}} (\rho,\theta^A)= \sum_{p=0}^{\infty} \sum_{\ell= 0}^{\infty}  \sum_{\bmm}  \frac{1}{p!} \dot a_{p;\ell,\bmm}(0) Y_{\ell,\bmm} \rho^{p}.
\endaligned
\end{equation}
By substituting Eq.~\eqref{Expansion-phi} into Eq.~\eqref{unphysical-spin-$0$-equation-expanded}, we get the following equation for $p>0$ 
\begin{equation}
    (1-\tau^2) \ddot{a}_{p;\ell,\bmm} + 2 \tau (p-1) \dot{a}_{p; \ell,\bmm} + \frac{1}{4} (2p+n+2l-4) (2l+n-2(1+p)) a_{p;\ell,\bmm} =0.
    \label{ODE-general-a-coefficient}
\end{equation}
where a dot denotes differentiation with respect to $\tau$.
%The solution of this ODE can be written as follows:
\begin{lemma}
\label{lemma1}
    %For $l \geq 0$ and $-l \leq m \leq l$,
    Given initial data as in Eq.~\eqref{initial-data} at $\tau=0$, the solution to Eq.~\eqref{ODE-general-a-coefficient} for any $p \geq 0$ is given by
        \begin{equation}
            \label{solution-exp}
            a_{p;\ell,\bmm}(\tau) = \left( C_{p;\ell,\bmm} P^{p}_{\nu}(\tau) + D_{p;\ell,\bmm} Q^{p}_{\nu}(\tau) \right)(1-\tau^2)^{\tfrac{p}{2}},
        \end{equation}
        where $P^{p}_{\nu}(\tau)$ is the associated Legendre polynomial of order $\nu$, $Q^{p}_{\nu}(\tau)$ is the associated Legendre function of the second kind of order $\nu$ and 
        \begin{equation}
        \label{nu-parameter}
        \nu=\frac{1}{2} (n+2l-4).
        \end{equation}
        The constants $C_{p;\ell,\bmm}$ and $D_{p;\ell,\bmm}$ are given by
        \begin{subequations}
            \begin{eqnarray}
                && C_{p;\ell,\bmm} = -\frac{(-2+2l+n-2p) a_{p;\ell,\bmm}(0) Q^{p}_{\nu+1}(0) -2 \dot{a}_{p;\ell,\bmm}(0) Q^{p}_{\nu}(0)}{
                (2l+n-2(1+p))(P^{p}_{\nu}(0)Q^{p}_{\nu+1}(0)-P^{p}_{\nu+1}(0)Q^{p}_{\nu}(0))}, \label{Constant-Cplm} \\
                && D_{p;\ell,\bmm} = -\frac{ (-2+2l+n-2p) a_{p;\ell,\bmm}(0) P^{p}_{\nu}(0) -2 \dot{a}_{p;\ell,\bmm}(0) P^{p}_{\nu+1}(0) }{
                (2l+n-2(1+p))(P^{p}_{\nu}(0)Q^{p}_{\nu+1}(0)-P^{p}_{\nu+1}(0)Q^{p}_{\nu}(0))}. \label{Constant-Dplm}
            \end{eqnarray}
        \end{subequations}
     \label{Solution-a}
\end{lemma}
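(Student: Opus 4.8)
The plan is to reduce the linear second-order ODE \eqref{ODE-general-a-coefficient} to the associated Legendre equation by stripping off the factor $(1-\tau^2)^{p/2}$, and then to fix the two integration constants from the Cauchy data at $\tau=0$.

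First I would substitute the ansatz $a_{p;\ell,\bmm}(\tau)=(1-\tau^2)^{p/2}\,u(\tau)$ into \eqref{ODE-general-a-coefficient}. Writing $w=(1-\tau^2)^{p/2}$ and collecting the coefficients of $\ddot u$, $\dot u$ and $u$, one finds that the first-order coefficient collapses to $-2\tau\,w$ (the contributions $-2p\tau$ from $(1-\tau^2)\dot w$ and $2(p-1)\tau$ from the original equation combine), while the zeroth-order coefficient, after factoring out $w$, becomes $\ell(\ell+n-3)-p^2/(1-\tau^2)$. Dividing the whole equation by $w$ then yields
\begin{equation*}
(1-\tau^2)\,\ddot{u}-2\tau\,\dot{u}+\left(\ell(\ell+n-3)-\frac{p^2}{1-\tau^2}\right)u=0 ,
\end{equation*}
which is exactly the associated Legendre equation with order $\mu=p$ and degree $\nu$ determined by $\nu(\nu+1)=\ell(\ell+n-3)$. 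Solving this quadratic for $\nu$ and taking the nonnegative root reproduces \eqref{nu-parameter}; the general solution is the stated combination $u=C\,P^p_\nu+D\,Q^p_\nu$, and multiplying back by $(1-\tau^2)^{p/2}$ gives \eqref{solution-exp}. This algebraic reduction — keeping careful track of the various powers of $(1-\tau^2)$ so that the singular $\mu^2/(1-\tau^2)$ term emerges with exactly $\mu^2=p^2$ — is the main, if routine, computational step (and the natural place to invoke Appendix~\ref{Appendix:Solutions-Legendre-ODE}).

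It remains to evaluate the constants. Since $(1-\tau^2)^{p/2}=1$ and its first derivative vanishes at $\tau=0$, the Cauchy data give $a_{p;\ell,\bmm}(0)=C\,P^p_\nu(0)+D\,Q^p_\nu(0)$ and $\dot{a}_{p;\ell,\bmm}(0)=C\,\dot{P}^p_\nu(0)+D\,\dot{Q}^p_\nu(0)$. To bring these into the form stated in the Lemma I would use the derivative recurrence
\begin{equation*}
(1-\tau^2)\frac{d}{d\tau}F^{p}_{\nu}(\tau)=(\nu+1)\tau\,F^{p}_{\nu}(\tau)-(\nu-p+1)\,F^{p}_{\nu+1}(\tau),\qquad F\in\{P,Q\},
\end{equation*}
which at $\tau=0$ gives $\dot{F}^p_\nu(0)=-(1-p+\nu)F^p_{\nu+1}(0)$. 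Substituting into the second condition turns the Cauchy data into the linear system
\begin{equation*}
\begin{pmatrix} P^p_\nu(0) & Q^p_\nu(0) \\ -(1-p+\nu)P^p_{\nu+1}(0) & -(1-p+\nu)Q^p_{\nu+1}(0)\end{pmatrix}\begin{pmatrix} C_{p;\ell,\bmm}\\ D_{p;\ell,\bmm}\end{pmatrix}=\begin{pmatrix} a_{p;\ell,\bmm}(0)\\ \dot{a}_{p;\ell,\bmm}(0)\end{pmatrix},
\end{equation*}
whose determinant is $(1-p+\nu)\bigl(P^p_{\nu+1}(0)Q^p_\nu(0)-P^p_\nu(0)Q^p_{\nu+1}(0)\bigr)$. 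Solving by Cramer's rule then produces exactly \eqref{Constant-Cplm} and \eqref{Constant-Dplm}.

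The only genuine subtlety I anticipate is matching conventions: the precise sign in the derivative recurrence, and hence in $C_{p;\ell,\bmm}$ and $D_{p;\ell,\bmm}$, depends on the Ferrers/Hobson normalisation adopted for $P^p_\nu$ and $Q^p_\nu$, so I would fix that convention at the outset and check it against the $p=0$ case. One should also note that the formulae presuppose $1-p+\nu\neq 0$ together with a non-vanishing determinant at $\tau=0$; the exceptional values of $(\ell,n,p)$ for which these degenerate must be treated separately, and this degeneracy is precisely what later signals the appearance of the logarithmic (polyhomogeneous) terms.
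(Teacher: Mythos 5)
Your proposal is correct and takes essentially the same route as the paper: the substitution $a_{p;\ell,\bm m}(\tau)=(1-\tau^2)^{p/2}\,w(\tau)$ reducing Eq.~\eqref{ODE-general-a-coefficient} to the associated Legendre equation with $\mu=p$ and $\nu(\nu+1)=\ell(\ell+n-3)$, with the positive root chosen using the symmetries $\mu\to-\mu$, $\nu\to-\nu-1$ of that equation. Your derivation of the constants --- via $\dot F^{p}_{\nu}(0)=-(1-p+\nu)F^{p}_{\nu+1}(0)$ for $F\in\{P,Q\}$ and Cramer's rule --- reproduces Eqs.~\eqref{Constant-Cplm}--\eqref{Constant-Dplm} exactly and actually supplies a computation the paper's proof leaves implicit, and your caveat about the degenerate cases ($1-p+\nu=0$ or vanishing determinant, which occur precisely for integer $\nu<p$, where $P^{p}_{\nu}\equiv 0$) is a legitimate refinement rather than a gap.
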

\begin{proof} 
    Starting from Eq. \eqref{ODE-general-a-coefficient} and making the substitution 
    \begin{equation*}
        a_{p;\ell,\bmm}(\tau) = (1-\tau^2)^{\tfrac{p}{2}} w(\tau),
    \end{equation*}
we get that Eq.~\eqref{ODE-general-a-coefficient} is a Legendre equation of the form
\begin{equation*}
    (1-\tau^2) \ddot{w}(\tau) - 2 \tau \dot{w}(\tau) + \left( l^2+l(n-3) + \frac{1}{4} \left(n^2 - 6n - \frac{4(-2+p^2+2\tau^2)}{1-\tau^2} \right)
    \right) w(\tau) =0.
\end{equation*}
Comparing with Eq. \eqref{differential-Equation-Appendix-A}, we see that  $\mu = \pm p$ and $\nu = \nu_{1} = \tfrac{1}{2} (2-2l-n)$ or $\nu = \nu_{2} = \tfrac{1}{2} (-4+2l+n)$. However, given that $P^{\pm \mu}_{\nu}(\pm z), Q^{\pm \mu}_{\nu}(\pm z), P^{\pm \mu}_{-\nu -1}(\pm z)$ and $Q^{\pm \mu}_{-\nu -1}(\pm z)$ are also solutions to Eq. \eqref{differential-Equation-Appendix-A} since Eq. \eqref{differential-Equation-Appendix-A} is invariant under the replacements $\mu \to - \mu$, $z \to - z$, $\nu \to - \nu - 1$, we can, without loss of generality, restrict to $\mu=p$ and $\nu = \nu_{2}$. Therefore, the solution to Eq. \eqref{ODE-general-a-coefficient} can be written in the form of Eq.~\eqref{solution-exp}. See Appendix \ref{Appendix:Solutions-Legendre-ODE} for more details on Legendre differential equations. 
%\begin{equation*}
%    a_{p;\ell,\bmm}(\tau) = \left( C_{p;\ell,\bmm} P^{p}_{\nu}(\tau) + D_{p;\ell,\bmm} Q^{p}_{\nu}(\tau) \right)(1-\tau^2)^{\tfrac{p}{2}}.
%\end{equation*}
\end{proof}
The asymptotic $C^{s}$ ($s \geq 0$) regularity of $a_{p;\ell,\bmm}(\tau)$ near $\tau = \pm 1$ crucially depends on whether $\nu$ is an integer or not, so next we split our analysis accordingly.
%%%%%%%%%%%%%%%%%%%%%%%%%%%%%%%%%%%%%%%%%%%%%%%%%%%%%%%
\subsection{Solutions for  $\nu \in \mathbb{Z}$}
%%%%%%%%%%%%%%%%%%%%%%%%%%%%%%%%%%%%%%%%%%%%%%%%%%%%%%%%%
By inspecting Eq.~\eqref{nu-parameter}, we see that $\nu \in \mathbb{Z}$ for any $l\ge 0$ for any even $n$ and it is a non-integer for any $l\ge 0$ for any odd $n$. For $n=4$, we have $\nu= l$. Without loss of generality, we restrict the analysis in the rest of this section to $\nu \in \mathbb{Z}^{+}$. Then, one can show the following:
\begin{proposition} \label{prop:Regularity-conditions-integer-nu}
    Consider $\nu \in \mathbb{Z}^{+}$. Then
    \begin{enumerate}
    \item For $\nu < p$, the solution given by Eq.~\eqref{solution-exp} in Lemma \ref{lemma1} is always smooth near $\tau = \pm 1$. 
    \item For  $\nu \ge p$, the solution given by Eq.~\eqref{solution-exp} in Lemma \ref{Solution-a} is smooth near $\tau = \pm 1$ if and only if,
    \begin{enumerate}
        \item [(i)] $a_{p;\ell,\bmm}(0) =0$  for odd $\nu+p$, and
        \item [(ii)] $\dot{a}_{p;\ell,\bmm}(0)=0$  for even $\nu+p$.
    \end{enumerate}
    \end{enumerate}
    \label{Regularity-conditions-integer-nu}
\end{proposition}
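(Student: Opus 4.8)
The plan is to read off the behaviour of the solution at $\tau=\pm1$ directly from the endpoint structure of the Legendre functions appearing in Lemma \ref{lemma1}. Writing $a_{p;\ell,\bmm}(\tau)=(1-\tau^2)^{p/2}w(\tau)$ with $w=C\,P^p_\nu+D\,Q^p_\nu$ solving the associated Legendre equation of integer degree $\nu$ and order $p$, the two inputs I would use are the Rodrigues-type representations $P^p_\nu=(-1)^p(1-\tau^2)^{p/2}\frac{d^p}{d\tau^p}P_\nu$ and $Q^p_\nu=(-1)^p(1-\tau^2)^{p/2}\frac{d^p}{d\tau^p}Q_\nu$, together with the decomposition $Q_\nu(\tau)=\tfrac12 P_\nu(\tau)\ln\frac{1+\tau}{1-\tau}-W_{\nu-1}(\tau)$, where $W_{\nu-1}$ is a polynomial of degree $\nu-1$. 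The point of these is that, up to the harmless factor $(-1)^p$, one has $a_{p;\ell,\bmm}=(1-\tau^2)^p\big(C\,\frac{d^p P_\nu}{d\tau^p}+D\,\frac{d^p Q_\nu}{d\tau^p}\big)$, so regularity (smoothness, i.e. absence of logarithmic terms) at $\tau=\pm1$ is controlled by the endpoint singularities of these $p$-th derivatives multiplied by $(1-\tau^2)^p$.

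For Part~1 ($\nu<p$) I would argue directly, bypassing the explicit constants. Since $\nu$ is a non-negative integer with $p>\nu$, the Ferrers function $P^p_\nu$ vanishes identically, and $\frac{d^p Q_\nu}{d\tau^p}$ carries \emph{no} logarithm: in the Leibniz expansion of $\frac{d^p}{d\tau^p}\!\big(\tfrac12 P_\nu\ln\frac{1+\tau}{1-\tau}\big)$ the factor $P_\nu^{(j)}$ forces $j\le\nu$, so the logarithm is differentiated at least $p-\nu\ge1$ times in every surviving term, leaving a rational function whose poles at $\tau=\pm1$ have order at most $p$. Multiplication by $(1-\tau^2)^p$ then cancels these poles, so $a_{p;\ell,\bmm}$ is regular. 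Since the indicial exponents of $w$ at $\tau=\pm1$ are $\pm p/2$, both branches give $a$-exponents $0$ and $p$, and no logarithm arises; hence every solution is regular, irrespective of the data.

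For Part~2 ($\nu\ge p$) the $C$-term is harmless: $P^p_\nu$ is an associated Legendre polynomial, so $(1-\tau^2)^p\frac{d^p P_\nu}{d\tau^p}$ is a polynomial and is regular. The obstruction is the $D$-term. Now $p\le\nu$, so the term $\tfrac12 P_\nu^{(p)}(\tau)\ln\frac{1+\tau}{1-\tau}$ survives the $p$-fold differentiation, and $P_\nu^{(p)}(\pm1)=\pm(-1)^{\nu}\frac{(\nu+p)!}{2^p p!(\nu-p)!}\neq0$. Consequently $a_{p;\ell,\bmm}$ contains a genuinely non-smooth term $\propto(1\mp\tau)^p\ln(1\mp\tau)$ at $\tau=\pm1$, coming solely from $D\,Q^p_\nu$. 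Since the $C$-part is smooth, $a_{p;\ell,\bmm}$ is regular at $\tau=\pm1$ if and only if $D_{p;\ell,\bmm}=0$.

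It remains to translate $D_{p;\ell,\bmm}=0$ into a condition on the data, which is where the parity dichotomy of (i)--(ii) emerges. By \eqref{Constant-Dplm} this is equivalent to $\dot a_{p;\ell,\bmm}(0)P^p_\nu(0)+(1-p+\nu)a_{p;\ell,\bmm}(0)P^p_{\nu+1}(0)=0$, with $1-p+\nu\ge1\neq0$ for $\nu\ge p$. Using the standard vanishing of Ferrers functions at the origin by parity, $P^p_\nu(0)=0\iff\nu+p$ odd and $P^p_{\nu+1}(0)=0\iff\nu+p$ even, the numerator collapses: for $\nu+p$ odd only the $a_{p;\ell,\bmm}(0)$-term survives, giving $a_{p;\ell,\bmm}(0)=0$ (condition (i)), while for $\nu+p$ even only the $\dot a_{p;\ell,\bmm}(0)$-term survives, giving $\dot a_{p;\ell,\bmm}(0)=0$ (condition (ii)). One checks in each parity that the Wronskian-type denominator $P^p_{\nu+1}(0)Q^p_\nu(0)-P^p_\nu(0)Q^p_{\nu+1}(0)$ is nonzero, so the equivalence is exact. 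I expect the main obstacle to be the careful bookkeeping of the logarithmic part of $Q^p_\nu$ near the endpoints — precisely, showing it is annihilated for $\nu<p$ but survives with the nonzero coefficient $P_\nu^{(p)}(\pm1)$ for $\nu\ge p$ — as this is exactly what upgrades the regularity criterion to an ``if and only if.''
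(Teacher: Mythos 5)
Your proposal is correct and takes essentially the same route as the paper's proof: the Rodrigues-type reduction $a_{p;\ell,\bmm}=(-1)^p(1-\tau^2)^p\bigl(C\,P^{(p)}_\nu+D\,Q^{(p)}_\nu\bigr)$, the identical vanishing of $P^p_\nu$ and the disappearance of the logarithm for $\nu<p$ (with the residual poles of order at most $p$ cancelled by $(1-\tau^2)^p$), the surviving $\tfrac12 P^{(p)}_\nu\log\frac{1+\tau}{1-\tau}$ term forcing $D_{p;\ell,\bmm}=0$ for $\nu\ge p$, and the parity evaluation of $P^p_\nu(0)$ and $P^p_{\nu+1}(0)$ in Eq.~\eqref{Constant-Dplm} to translate $D_{p;\ell,\bmm}=0$ into conditions (i)--(ii). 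The only blemish is the sign in your endpoint value — the correct one is $P^{(p)}_\nu(1)=\frac{(\nu+p)!}{2^p\,p!\,(\nu-p)!}>0$ with $P^{(p)}_\nu(-1)=(-1)^{\nu+p}P^{(p)}_\nu(1)$, not $\pm(-1)^\nu(\cdots)$ — but since only nonvanishing is used, the argument stands.
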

\begin{proof}
    $(i)$  For $\nu < p$, $P^{p}_{\nu}(\tau)=0$ since $P^{p}_{\nu}(\tau)$ is given by 
    \begin{eqnarray}
    && P^{p}_{\nu}(\tau)= \frac{(-1)^{p+\nu}}{2^{\nu} \nu!} (1-\tau^2)^{\tfrac{p}{2}} \frac{d^{p+\nu} }{d \tau^{p+\nu}} \left( (1-\tau)^{\nu} (1+\tau)^{\nu} \right), \label{Associated-legendre-polynomial-integer-nu-expanded} \\
    && \phantom{ P^{p}_{\nu}(\tau)} = \sum_{i=p}^{\nu} \frac{(-1)^{i+p+\nu}}{2^{\nu} \nu!} \binom{p+\nu}{i} \nu^{(i)} \nu^{(p-i+\nu)} (1-\tau)^{\nu-i+\tfrac{p}{2}} (1+\tau)^{i-\tfrac{p}{2}}, \nonumber
    \end{eqnarray}
    where $x^{(i)} := x (x-1) \ldots (x-(i-1))$ and in the above we used Eq. \eqref{Definition-Legendre-polynomial} and Eq. \eqref{Definition-associated-legendre-polynomial} and the fact that for arbitrary functions $f(\tau)$ and $g(\tau)$:
        \begin{equation}
            \frac{d^{s}}{d\tau^{s}} (f(\tau) g(\tau)) = \sum_{i=0}^{s} \binom{s}{i} f^{(i)}(\tau) g^{(s-i)}(\tau).
            \label{Derivative-two-functions}
        \end{equation}
    Therefore, for $\nu < p$, the solution $a_{p;\ell,\bmm}(\tau)$ can be written as
\begin{equation}
    a_{p;\ell,\bmm}(\tau)=  D_{p;\ell,\bmm} Q^{p}_{\nu}(\tau) (1-\tau^2)^{\tfrac{p}{2}}.
    \label{Solution-nu-less-than-p}
\end{equation}
Then, using Eq. \eqref{Definition-Legendre-function-2} and Eq. \eqref{Derivative-two-functions}, it is straightforward to show that for $\nu \in \mathbb{Z}^{+}$, we have
    \begin{eqnarray}
    && Q^{(p)}_{\nu}(\tau) =  \frac{d^{p}}{d \tau^{p}} Q_{\nu}(\tau) = \frac{1}{2} P^{(p)}_{\nu}(\tau) \log\left( \frac{1+\tau}{1-\tau} \right) \nonumber  \\
    && \phantom{Q^{(p)}_{\nu}(\tau)= \frac{d^{p}}{d \tau^{p}} Q_{\nu}(\tau)} + \frac{1}{2} \sum_{i=1}^{p} \frac{i!}{i}  \binom{p}{i} \left( (-1)^{-i+1} (1+\tau)^{-i} + (1-\tau)^{-i} \right) P_{\nu}^{(p-i)}(\tau) \label{p-derivative-Legendre-function-2} \\
    && \phantom{Q^{(p)}_{\nu}(\tau)= \frac{d^{p}}{d \tau^{p}} Q_{\nu}(\tau)} - \sum_{i=0}^{\tfrac{\nu-1}{2}} \frac{(2\nu-4i-1)}{(\nu-1)(2i+1)} P^{(p)}_{\nu-2i-1}(\tau). \nonumber
\end{eqnarray}
Eq. \eqref{Definition-associated-legendre-function-2} then implies  
\begin{equation*}
    Q^{p}_{\nu}(\tau) = \frac{(-1)^{p}}{2} (1-\tau^2)^{\tfrac{p}{2}} \sum_{i=1}^{p} \frac{i!}{i}  \binom{p}{i} \left( (-1)^{-i+1} (1+\tau)^{-i} + (1-\tau)^{-i} \right) P_{\nu}^{(p-i)}(\tau), 
\end{equation*}
where we used that for $\nu < p$, both $P^{(p)}_{\nu}(\tau)$ and $P^{(p)}_{\nu-2i-1}(\tau)$ are vanishing for any $\tau$. Substituting in Eq. \eqref{Solution-nu-less-than-p}, we get
\begin{equation*}
    a_{p;\ell,\bmm}(\tau) = \frac{1}{2} D_{p;\ell,\bmm} \sum_{i=1}^{p} \frac{i! (-1)^{p-i}}{i} \binom{p}{i} (1-\tau^2)^{p-i} \left( - (1-\tau)^{i} + (-1)^{i} (1+\tau)^{i} \right) P_{\nu}^{(p-i)}(\tau),  
    \label{solution-general-nu-less-p}
\end{equation*}
which is smooth near $\tau=\pm 1$ given that $p-i\geq0$.
%\end{proof}
\\

$(ii)$ For  $\nu \ge p$, note that the log term in Eq. \eqref{p-derivative-Legendre-function-2} will be non-vanishing, in general. Therefore, for $\nu \in \mathbb{Z}^{+}$,  $a_{p;\ell,\bmm}(\tau)$ diverges logarithmically at $\tau = \pm 1$ unless we enforce the condition $D_{p;\ell,\bmm}=0$. Then the regularity conditions follows from the observation that the coefficient of $\dot{a}_{p;\ell,\bmm}(0)$ in Eq. \eqref{Constant-Dplm} is vanishing when $\nu+p$ is odd while the coefficient of $a_{p:\ell,\bmm}(0)$ is vanishing when $\nu+p$ is even. In particular, for $\nu \in \mathbb{Z}$, we have
\begin{equation*}
    P^{p}_{\nu}(0) \propto \frac{\Gamma(\tfrac{1+p+\nu}{2})}{\Gamma(1-p+\nu) \Gamma(\tfrac{1+p-\nu}{2})},
\end{equation*}
which is vanishing only at the poles of $\Gamma(1-p+\nu)$ and $\Gamma(\tfrac{1+p-\nu}{2})$. For $\nu \geq p, \Gamma(1-p+\nu)$ is never diverging whereas $\Gamma(\tfrac{1+p-\nu}{2})$ will diverge for odd $p+\nu$ given that $\tfrac{1+p-\nu}{2} \leq \tfrac{1}{2}$ and that the only integers less than $\tfrac{1}{2}$ are exactly those where $\Gamma(\tfrac{1+p-\nu}{2})$ has poles. Similar argument follows for $P^{p}_{\nu+1}(0)$.
\end{proof}

We emphasize that since we get log terms in the solutions, i.e polyhomogeneous solutions, then $a_{p;\ell,\bmm}(\tau)$ will not be $C^{\infty}$ near $\tau= \pm 1$ unless the regularity conditions are imposed.

%%%%%%%%%%%%%%%%%%%%%%%%%%%%%%
\subsubsection{The subcase $\nu\ge p$}
%Comment: The reason this case is the interesting one for the asymptotic charges is that we are only interested in the p=0 term for the calculation of the asymptotic charges and there is no integer that is less than p=0. So the case \nu<p where \nu is an integer is not 'needed' in the calculation of the charges.
For the computation of the asymptotic conserved charges in Section \ref{charges-sec}, we will be especially interested in the case $\nu\ge p$. So we will now make a few remarks on this case. As noted in Proposition \ref{prop:Regularity-conditions-integer-nu}, the regularity conditions $(ii)$ guarantees the smoothness of $a_{p;\ell,\bmm}(\tau)$ in the interval $\tau \in [-1, 1]$. This is clear since $P^{p}_{\nu}(\tau) (1-\tau^2)^{\tfrac{p}{2}}$ is a polynomial of degree $\nu+p$ for $\nu \in \mathbb{Z}^{+}$. It is also useful to note that given initial data that satisfy the regularity conditions $(ii)$, the solution $a_{p;\ell,\bmm}(\tau)$ can be written as
\begin{equation}
    a_{p;\ell,\bmm}(\tau) = C_{p;\ell,\bmm} P^{p}_{\nu}(\tau) (1-\tau^2)^{\tfrac{p}{2}},~~~{\text {for}}~~~\nu\ge p.
    \label{Solution-with-regularity-conditions}
\end{equation}
 Substituting Eq.~\eqref{Associated-legendre-polynomial-integer-nu-expanded} in Eq. \eqref{Solution-with-regularity-conditions}, we get
\begin{equation}
\label{solution-general}
    a_{p;\ell,\bmm}(\tau) = C_{p;\ell,\bmm} \sum_{i=p}^{\nu} \frac{(-1)^{i+p+\nu}}{2^{\nu} \nu!} \binom{p+\nu}{i} \nu^{(i)} \nu^{(p-i+\nu)} (1-\tau)^{\nu-i+p} (1+\tau)^{i}.
\end{equation}
Then, one can show that for $\nu\ge p$
\begin{equation*}
    a_{p;\ell,\bmm}(\pm 1) = \begin{cases} C_{p;\ell,\bmm} 2^{p} (\Gamma(1-p))^{-1} & \text{for } \tau =1, \\
        (-1)^{p+\nu} C_{p;\ell,\bmm}  2^{p} (\Gamma(1-p))^{-1} & \text{for } \tau =-1,
    \end{cases}
\end{equation*}
and in particular
\begin{equation}
\label{solution-nu-p}
    a_{p;\ell,\bmm}(\pm 1) = \begin{cases} C_{0;\ell,\bmm} & \text{for } p=0, \tau =1, \\
        (-1)^{\nu} C_{0;\ell,\bmm}  & \text{for } p=0, \tau =-1, \\
        0 & \text{for } p\neq0, \tau =\pm1.
    \end{cases}
\end{equation}
\begin{remark}
    The behaviour of the solution Eq. \eqref{solution-exp} for $\mu \equiv -p \in \mathbb{Z}^{-}$ can be deduced from the following connection formulas:
    \begin{eqnarray*}
        && P^{-p}_{\nu}(\tau) = (-1)^{p} \frac{\Gamma(\nu-p+1)}{\Gamma(\nu+p+1)} P^{p}_{\nu}(\tau), \\
        && Q^{-p}_{\nu}(\tau) = (-1)^{p} \frac{\Gamma(\nu-p+1)}{\Gamma(\nu+p+1)} Q^{p}_{\nu}(\tau)
    \end{eqnarray*}
   and, moreover, we have
    \begin{eqnarray*}
        && P^{p}_{-\nu-1}(\tau) = P^{p}_{\nu}(\tau), \\
        && \sin((\nu-p)\pi) Q^{p}_{-\nu-1}(\tau) = \sin((\nu+p)\pi) Q^{p}_{\nu}(\tau) - \pi \cos(\nu \pi) \cos(p \pi) P^{p}_{\nu}(\tau).
    \end{eqnarray*}
\end{remark}
%%%%%%%%%%%%%%%%%%%%%%%%%%%%%%%%%%%%%%%%%%%%%%%%%%%%%%%%%%%%%%%%
\subsection{Solutions for $\nu \notin \mathbb{Z}$}
%%%%%%%%%%%%%%%%%%%%%%%%%%%%%%%%%%%%%%%%%%%%%%%%%%%%%%%%%%%%%%%%%
Observe that the solution discussed in the previous section is manifestly symmetric. In particular, the conditions in Proposition \ref{Regularity-conditions-integer-nu} ensure the $C^{\infty}$-regularity of the solution at both $\tau=1$ and $\tau=-1$. For $\nu \notin  \mathbb{Z}$, one can show that the conditions ensuring the $C^{0}$-regularity of the solution given by Eq.~\eqref{solution-exp} at $\tau=1$ are not equivalent to those ensuring its regularity at $\tau=-1$ and that requiring both conditions to be satisfied implies a trivial solution for $a_{p;\ell,\bmm}(\tau)$. Instead, consider the manifestly symmetric solution given by 
\begin{equation}
    a_{p;\ell,\bmm} (\tau) = \left( C_{p;\ell,\bmm} (P^{p}_{\nu}(\tau)+P^{p}_{\nu}(-\tau)) + D_{p;\ell,\bmm} (Q^{p}_{\nu}(\tau)+Q^{p}_{\nu}(-\tau)) \right) (1-\tau^2)^{\tfrac{p}{2}}.
    \label{Solution-non-integer-nu}
\end{equation}
Using the connection formulas between $P^{p}_{\nu}(\tau), Q^{p}_{\nu}(\tau), P^{p}_{\nu}(-\tau)$ and $Q^{p}_{\nu}(-\tau)$, namely
\begin{subequations}
    \begin{eqnarray}
        && P^{p}_{\nu}(\tau) = \cos((\nu-p)\pi) P^{p}_{\nu}(-\tau) - \frac{2}{\pi} \sin((\nu-p)\pi) Q^{p}_{\nu}(-\tau), \\
        && Q^{p}_{\nu}(\tau) = - \cos((\nu-p)\pi) Q^{p}_{\nu}(-\tau) - \frac{\pi}{2} \sin((\nu-p)\pi) P^{p}_{\nu}(-\tau),
    \end{eqnarray}
    \label{Connection-formulas-3}
\end{subequations}
we can write $a_{p;\ell,\bmm} (\tau)$ as 
    \begin{equation}
    \label{solution-not-int}
    \aligned
        a_{p;\ell,\bmm} (\tau) &= \left(C_{p;\ell,\bmm} (1+\cos((\nu-p)\pi)) - \frac{\pi}{2} D_{p;\ell,\bmm} \sin((\nu-p)\pi) \right) P^{p}_{\nu}(\pm \tau) (1-\tau^2)^{\tfrac{p}{2}} \\
        %&& \phantom{a_{p;\ell,\bmm} (\tau)} 
        &+ \left( D_{p;\ell,\bmm}  (1-\cos((\nu-p)\pi)) - \frac{2}{\pi} C_{p;\ell,\bmm} \sin((\nu-p)\pi) \right) Q^{p}_{\nu}(\pm \tau) (1-\tau^2)^{\tfrac{p}{2}},
        \endaligned
    \end{equation}
where the $\pm$ signs are chosen such that $P^{p}_{\nu}(\pm \tau) (1-\tau^2)^{\tfrac{p}{2}}$ has well-defined limits $\tau=\pm 1$. In particular since $\lim_{\tau\to -1} P^{p}_{\nu}(\tau)$ diverges, we are setting up our solution to avoid this divergent behaviour.

Starting from the limits \cite{DLMF}
    \begin{subequations}
        \begin{eqnarray}
        && \lim_{\substack{\tau \to \pm 1}} \left( \frac{2}{1\mp\tau} \right)^{\tfrac{p}{2}} P^{p}_{\nu}( \pm \tau) = (-1)^{p} \frac{(\nu-p+1)_{(2p)}}{p!}, \label{limit-P-p-nu} \\
        && \lim_{\substack{\tau \to \pm 1}} \left( \frac{1\mp \tau}{2} \right)^{\tfrac{p}{2}} Q^{p}_{\nu}( \pm \tau) = \frac{1}{2} \cos(p \pi) \Gamma(p),  \qquad p \neq 0,
        \end{eqnarray}
    \end{subequations}
    and the expansion of $Q_{\nu}(\pm\tau)$ as $\tau \to \pm 1$
    \begin{equation*}
        Q_{\nu}( \pm \tau) = \frac{1}{2} \log \left( \frac{1}{1\mp\tau} \right) - \psi(1) - \psi(\nu+1) + O((1 \mp \tau) \log(1 \mp \tau)), \qquad \text{as } \tau \to \pm 1,
    \end{equation*}
    where $(x)_{(i)}:= x (x+1) \ldots (x+(i-1))$ and $\psi(z):= d (\log \Gamma(z))/dz$, one can show that
    \begin{eqnarray*}
        && \lim_{\substack{\tau \to \pm 1}}  P^{p}_{\nu}(\pm \tau) (1-\tau^2)^{\tfrac{p}{2}} \begin{cases}
                  =0   & \text{ for } p \neq 0, \\
                  =1 & \text{ for } p = 0,
            \end{cases} \\
        && \lim_{\substack{\tau \to \pm 1}}  Q^{p}_{\nu}(\pm \tau) (1-\tau^2)^{\tfrac{p}{2}} \begin{cases}
                  = 2^{p-1} \cos(p \pi) \Gamma(p)   & \text{ for } p \neq 0, \\
                  \sim \frac{1}{2} \log \left( \frac{2}{1 \mp \tau} \right) - \psi(1) - \psi(\nu+1) + O((1 \mp \tau) \log(1 \mp \tau)) & \text{ for } p = 0.
        \end{cases}
    \end{eqnarray*}
Finally, by taking the limits $\tau \to \pm 1$ of Eq.~\eqref{solution-not-int}, we find that $a_{p;\ell,\bmm}(\tau)$ has a well-defined limit at $\tau = \pm 1$ only if the coefficient of $\lim_{\substack{\tau \to \pm 1}} Q^{p}_{\nu}(\pm \tau) (1-\tau^2)^{\tfrac{p}{2}}$ vanishes for $p=0$. Therefore, we have the following result:
        \begin{proposition} \label{prop:regularity-conditions-nu-non-integer}
            Consider $\nu  \notin\mathbb{Z}$. Then
            \begin{enumerate} 
            \item  For $p=0$, the solution given by Eq. \eqref{Solution-non-integer-nu} has well-defined limits at $ \tau = \pm 1$ if and only if 
            \begin{equation*}
                C_{0;\ell,\bmm} = \frac{\pi}{2} D_{0;\ell,\bmm} \tan \left( \frac{\nu}{2} \pi \right)\ne 0
            \end{equation*}
            \label{regularity-conditions-nu-non-integer}
       but this implies
        \begin{equation*}
            a_{0,\ell,\bmm}(\tau) = 0.
        \end{equation*}
        \item For $p \neq 0$, the solution given by Eq.~\eqref{Solution-non-integer-nu} has a well-defined limit at $ \tau = \pm 1$  with
        \begin{equation}
        \label{p-non-zero}
          \hspace{-1cm}  a_{p;\ell,\bmm}(\pm1) = \left( D_{p;\ell,\bmm}  (1-\cos((\nu-p)\pi)) - \frac{2}{\pi} C_{p;\ell,\bmm} \sin((\nu-p)\pi) \right) 2^{p-1} \cos(p \pi) \Gamma(p).
        \end{equation}
        \end{enumerate}
         \end{proposition}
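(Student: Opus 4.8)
The plan is to obtain both statements directly from the representation Eq.~\eqref{solution-not-int}, in which $a_{p;\ell,\bmm}(\tau)$ is already written as a linear combination of $P^{p}_{\nu}(\pm\tau)(1-\tau^2)^{\tfrac{p}{2}}$ and $Q^{p}_{\nu}(\pm\tau)(1-\tau^2)^{\tfrac{p}{2}}$, with the sign $\pm$ chosen so that the $P$-block is regular at the endpoint in question. The only analytic inputs required are the two boundary limits of these blocks established immediately before the proposition, so the task reduces to inserting those limits and simplifying the resulting constant coefficients.

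First I would dispose of the case $p\neq 0$, which carries no regularity condition. Here both blocks have finite endpoint limits, $P^{p}_{\nu}(\pm\tau)(1-\tau^2)^{\tfrac{p}{2}}\to 0$ and $Q^{p}_{\nu}(\pm\tau)(1-\tau^2)^{\tfrac{p}{2}}\to 2^{p-1}\cos(p\pi)\Gamma(p)$, so $a_{p;\ell,\bmm}$ automatically admits a well-defined limit at $\tau=\pm 1$. Substituting these two limits into Eq.~\eqref{solution-not-int} annihilates the $P$-contribution and leaves precisely the expression Eq.~\eqref{p-non-zero}, which proves part (2).

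For $p=0$ the divergence comes entirely from the $Q$-block, since $Q_{\nu}(\pm\tau)$ grows like $\tfrac12\log\!\big(2/(1\mp\tau)\big)$ while $P_{\nu}(\pm\tau)\to 1$. Thus $a_{0;\ell,\bmm}$ has a finite limit at $\tau=\pm 1$ if and only if the coefficient multiplying the $Q$-block in Eq.~\eqref{solution-not-int} vanishes, i.e.
\[
D_{0;\ell,\bmm}\,(1-\cos(\nu\pi)) - \tfrac{2}{\pi}\,C_{0;\ell,\bmm}\,\sin(\nu\pi) = 0 .
\]
Solving for $C_{0;\ell,\bmm}$ and using the half-angle identity $(1-\cos(\nu\pi))/\sin(\nu\pi)=\tan(\nu\pi/2)$ (legitimate since $\sin(\nu\pi)\neq 0$ for $\nu\notin\mathbb{Z}$) yields the stated regularity condition $C_{0;\ell,\bmm}=\tfrac{\pi}{2}D_{0;\ell,\bmm}\tan(\tfrac{\nu}{2}\pi)$, which can indeed hold with $C_{0;\ell,\bmm},D_{0;\ell,\bmm}\neq 0$.

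The crux, and the only mildly delicate point, is then to show that this very condition collapses the whole solution to zero. I would feed $C_{0;\ell,\bmm}=\tfrac{\pi}{2}D_{0;\ell,\bmm}\tan(\tfrac{\nu}{2}\pi)$ back into the coefficient of the $P$-block, namely $C_{0;\ell,\bmm}(1+\cos(\nu\pi))-\tfrac{\pi}{2}D_{0;\ell,\bmm}\sin(\nu\pi)$, and simplify with $1+\cos(\nu\pi)=2\cos^2(\tfrac{\nu}{2}\pi)$ and $\sin(\nu\pi)=2\sin(\tfrac{\nu}{2}\pi)\cos(\tfrac{\nu}{2}\pi)$; the two terms cancel, so this coefficient vanishes as well. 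With both coefficients zero, $a_{0;\ell,\bmm}(\tau)\equiv 0$, establishing part (1). The main obstacle is therefore not analytic but purely algebraic: carrying out the half-angle simplification carefully enough to see the exact cancellation, while noting that the parenthetical coefficients in Eq.~\eqref{solution-not-int} are independent of the choice of $\pm$, so the same condition governs regularity at both $\tau=+1$ and $\tau=-1$, consistently with the manifest symmetry of Eq.~\eqref{Solution-non-integer-nu}.
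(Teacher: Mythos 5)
Your proposal is correct and follows essentially the same route as the paper: the paper's own argument is the discussion immediately preceding the proposition, which rewrites the symmetric solution via the connection formulas as Eq.~\eqref{solution-not-int} and then inserts the endpoint limits of $P^{p}_{\nu}(\pm\tau)(1-\tau^2)^{p/2}$ and $Q^{p}_{\nu}(\pm\tau)(1-\tau^2)^{p/2}$, exactly as you do. Your only addition is to make explicit, via the half-angle identities, the cancellation showing that the regularity condition also kills the coefficient of the $P$-block (whence $a_{0;\ell,\bmm}\equiv 0$), a step the paper asserts without spelling out.
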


\begin{remark}
    \label{Non-integer-nu-regularity}
    \emph{As we will soon demonstrate, only $a_{0;\ell,\bmm}(\tau)$ is required for the computation of the asymptotic charges. Consequently, we do not pursue the conditions that ensure the $C^{\infty}$-regularity of $a_{p;\ell,\bmm}(\tau)$ when $p \neq 0$ and $\nu \notin \mathbb{Z}$ (for $p=0$ and $\nu \notin \mathbb{Z}$, our conditions only implies smoothness near $\tau = \pm 1$ because we get a trivial solution). In principle, one can employ the connection formulas provided in \cite{DLMF} to determine the asymptotic behaviour of the derivatives of $P^{p}_{\nu}(\tau)$ and $Q^{p}_{\nu}(\tau)$, or alternatively, utilise their explicit representations in terms of hypergeometric functions shown in Appendix \ref{Appendix:Solutions-Legendre-ODE}.}
\end{remark}
%%%%%%%%%%%%%%%%%%%%%%%%%%%%%%%%%%%%%%%%%%%%%%%%
\section{Asymptotic charges for the spin-$0$ field}
\label{charges-sec}
%%%%%%%%%%%%%%%%%%%%%%%%%%%%%%%%%%%%%%%%%%%%%%%%%%%
In this section, we use the results of the previous section to calculate the spin-$0$ asymptotic charges at the critical sets $\mathcal{I}^{\pm}$. To do so, we will require an expansion for $\tilde{\phi}$ given Eq.~\eqref{Expansion-phi}. Recall that $\tilde{\phi} = \Theta^{n/2-1} \phi$, so $\tilde{\phi}$ can be written as
        \begin{equation}
            \tilde{\phi} (\tau,\rho,\theta^A)=  \Theta^{n/2-1}\sum_{p=0}^{\infty} \sum_{\ell=0}^{\infty} \sum_{\bmm} \frac{1}{p!} a_{p;\ell,\bmm}(\tau) Y_{\ell,\bmm} \rho^{p}.
            \label{expansion-tilde-phi}
        \end{equation}
    or using Eq. \eqref{Coordinate-transformation}
        \begin{equation}
            \tilde{\phi} (\tilde t,\tilde r, \theta^A)=  \frac{1}{\tilde{r}^{n/2-1}}\sum_{p=0}^{\infty} \sum_{\ell=0}^{\infty} \sum_{\bmm} \frac{1}{p!} a_{p;\ell,\bmm}\left( \frac{\tilde{t}}{\tilde{r}} \right) Y_{\ell,\bmm} \left( \frac{\tilde{r}}{\tilde{r}^2-\tilde{t}^2} \right)^{p}.
            \label{expansion-tilde-phi-spherical}
        \end{equation}
Now, consider the lowest order contributions to $p$ around $\rho=0$. Then, Eq.~\eqref{expansion-tilde-phi-spherical} can be written as 
\begin{equation*}
    \tilde{\phi}(\tilde{t},\tilde{r}, \tilde{\theta}^A) = \frac{1}{\tilde{r}^{n/2-1}} \sum_{\ell=0}^{\infty} \sum_{\bmm} a_{0;\ell,\bmm}\left(\frac{\tilde{t}}{\tilde{r}} \right) Y_{\ell,\bmm}+ O(\tilde{r}^{2-n})
\end{equation*}
and near $\mathscr{I}^{\pm}$ %in the limit $\tilde t\to \pm\tilde r$, 
the above can be written as
\begin{equation}
    \tilde{\phi} (\tilde{r}, \tilde{\theta}^A)  = \frac{\tilde{\phi}^{(n/2-1)}|_{\mathcal{I}^{\pm}}(\tilde{\theta}^{\mathcal{A}})}{\tilde{r}^{n/2-1}} + O(\tilde{r}^{2-n}),
\end{equation}
where\footnote{For $n=4$ note that these conditions coincide with the ones of \cite{NguyenWest22}.}  $\tilde{\phi}^{'(n/2-1)}\equiv \bmpartial_{\tilde{t}} \tilde{\phi}^{(n/2-1)}=0$ and 
\begin{equation}
    \tilde{\phi}^{(n/2-1)}|_{\mathcal{I}^{\pm}}(\tilde{\theta}^{\mathcal{A}}) \equiv \sum_{\ell=0}^{\infty} \sum_{\bmm} a_{0;\ell,\bmm}(\pm 1) Y_{\ell,\bmm}.
    \label{tilde-phi-first-order-critical-sets}
\end{equation}
With the above, we are now ready to introduce the asymptotic charges associated with spin-$0$ fields. Following the discussion in  \cite{NguyenWest22} (see also \cite{HennTroe19, NguyenWest22-2}), one can show that in $n$-dimensional Minkowski spacetimes, the asymptotic expansion of $\tilde{\phi}$ is preserved by Poincar\'{e} transformations if $\bmpartial_{\tilde{t}} \tilde{\phi}^{(n/2-1)}=0$. Therefore, the asymptotic charges on a cut $\mathcal{C}$ of null infinity $\mathscr{I}^{\pm}$ in $n$-dimensional spacetime can be written as \footnote{Compare with the definition of the conserved charges in \cite{FuenHenn24}.}
\begin{equation*}
    \mathcal{Q}_{n}(\lambda, \mathcal{C}) = \oint_{\mathcal{C}} \varepsilon_{n-2} \lambda(\tilde{\theta}^{\mathcal{A}}) \tilde{\phi}^{(n/2-1)}, 
\end{equation*}
%At $\mathcal{I}^{\pm}$, we write
%\begin{equation}
%\label{charges}
%    \mathcal{Q}_{n}(\lambda, \mathcal{I}^{\pm}) = \oint_{\mathcal{I}^{\pm}} \varepsilon_{n-2} \lambda(\tilde{\theta}^{\mathcal{A}}) \tilde{\phi}^{(n/2-1)}|_{\mathcal{I}^{\pm}}, 
%\end{equation}
where $\varepsilon_{n-2}$ is the area element on $\mathcal{C}$ and $\lambda(\tilde{\theta}^{\mathcal{A}}) $ is an arbitrary function of the angles. Given Eq. \eqref{tilde-phi-first-order-critical-sets}, setting $\mathcal{C}=\mathcal{I}^{\pm}$ and choosing $\lambda(\tilde{\theta}^{\mathcal{A}}) = Y_{\ell',\bmm'}$, where $\bmm' = (\ell'_{1}, \ell'_{2}, \ldots, \ell'_{n-3})$, we get
\begin{eqnarray}
    && \mathcal{Q}_{n}(\lambda, \mathcal{I}^{\pm}) = \sum_{\ell=0}^{\infty} \sum_{\bmm} a_{0;\ell,\bmm}(\pm 1) \oint_{\mathcal{I}^{\pm}}  \varepsilon_{n-2}  Y_{\ell,\bmm}  Y_{\ell',\bmm'} \nonumber \\
    && \phantom{\mathcal{Q}_{n}(\lambda, \mathcal{I}^{\pm})} = \sum_{\ell=0}^{\infty} \sum_{\bmm} a_{0;\ell,\bmm}(\pm 1) \delta_{\ell \ell'} \delta_{\bmm \bmm'} \nonumber  \\ 
    && \phantom{\mathcal{Q}_{n}(\lambda, \mathcal{I}^{\pm})} = a_{0;\ell',\bmm'}(\pm 1). \label{charges}
\end{eqnarray}
Then, given the solution $a_{0;\ell,\bmm}(\tau)$ discussed in the previous section, we conclude with the following:
\begin{theorem}
Consider a spin-$0$ field $\phi$ on $n$-dimensional Minkowskian spacetime satisfying Eq.~\eqref{unphysical-spin-$0$-equation-expanded} and consider asymptotic solutions of the form Eq.~\eqref{Expansion-phi} with initial data satisfying the regularity condition of Propositions \ref{prop:Regularity-conditions-integer-nu} and \ref{prop:regularity-conditions-nu-non-integer} for the $\nu \in \mathbb{Z}$ and $\nu \notin \mathbb{Z}$ cases, respectively. Then, the spin-$0$ asymptotic charges given by Eq.~\eqref{charges} satisfy the following:
\begin{enumerate}
\item For even $n\geq4$ and $l\ge 0$, there are infinitely many conserved charges at ${\cal I}^\pm$ given by:
\begin{eqnarray*}
    \mathcal{Q}_{4}(Y_{\ell, \bmm}, \mathcal{I}^{\pm}) = \begin{cases} C_{0;\ell,\bmm} & \text{for } \tau =1, \\
        (-1)^{\ell} C_{0;\ell,\bmm}  & \text{for } \tau =-1.
    \end{cases}
\end{eqnarray*}
\item For odd $n>4$, $l=0$, there is at least one \emph{non-trivial} conserved charge at ${\cal I}^\pm$ given by
\begin{equation*}
    \mathcal{Q}_{n}(Y_{0, \bm0}, \mathcal{I}^{\pm}) = C_{0;0,\bm0}.
\end{equation*}
\item For odd $n>4$ and $l>0$, there exist no non-trivial charges with well-defined limits at $\mathcal{I}^{\pm}$.
\end{enumerate}
\end{theorem}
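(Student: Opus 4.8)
The plan is to reduce the entire statement to the evaluation of the single coefficient $a_{0;\ell,\bmm}(\tau)$ at the critical sets $\tau=\pm1$, since Eq.~\eqref{charges} already identifies the charge $\mathcal{Q}_{n}(Y_{\ell,\bmm},\mathcal{I}^{\pm})$ with $a_{0;\ell,\bmm}(\pm1)$. Thus everything follows from specialising the solution of Lemma~\ref{lemma1} to $p=0$ and imposing the relevant regularity conditions. A preliminary observation I would record is that $\nu$ as defined by Eq.~\eqref{nu-parameter} satisfies $\nu\ge0$ for every $l\ge0$, so the case $p=0$ always lies in the subcase $\nu\ge p$; this is precisely the regime for which Eq.~\eqref{solution-nu-p} was derived.

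For $n=4$ I would use that Eq.~\eqref{nu-parameter} collapses to $\nu=l\in\mathbb{Z}^{+}$ for all $l\ge0$. Imposing the integer-$\nu$ regularity conditions of Proposition~\ref{prop:Regularity-conditions-integer-nu}(ii) for $p=0$ (which forces $D_{0;\ell,\bmm}=0$ and fixes one of $a_{0;\ell,\bmm}(0),\dot a_{0;\ell,\bmm}(0)$ according to the parity of $\nu$, while leaving $C_{0;\ell,\bmm}$ free) reduces the solution to Eq.~\eqref{Solution-with-regularity-conditions}, and reading off Eq.~\eqref{solution-nu-p} with $\nu=l$ yields $a_{0;\ell,\bmm}(1)=C_{0;\ell,\bmm}$ and $a_{0;\ell,\bmm}(-1)=(-1)^{l}C_{0;\ell,\bmm}$. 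Since $C_{0;\ell,\bmm}$ is unconstrained and $l$ ranges over all non-negative integers, this produces the claimed infinite family. The case $n>4$, $l=0$ is immediate along the same lines: Eq.~\eqref{nu-parameter} forces $\nu=0$ irrespective of $n$, so $p=0=\nu$ is again an integer case with $P^{0}_{0}\equiv1$, and the charge equals $C_{0;0,\bm0}$, generically nonzero, giving the asserted non-trivial charge.

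The substantive case is $n>4$, $l>0$, which splits according to whether $\nu\in\mathbb{Z}$. When $\nu\notin\mathbb{Z}$ I would invoke Proposition~\ref{prop:regularity-conditions-nu-non-integer}(1): for $p=0$ the only regular solution has $a_{0;\ell,\bmm}\equiv0$, so the corresponding charge vanishes identically and no non-trivial charges survive in this branch. When $\nu\in\mathbb{Z}$ the integer-case analysis applies verbatim and yields a non-trivial charge proportional to $C_{0;\ell,\bmm}$, exactly as in the $n=4$ computation. It therefore remains to show that, for each fixed $n>4$, only finitely many $l>0$ make $\nu$ an integer. I would recast $\nu\in\mathbb{Z}$ as the requirement that $D:=1+4l(l+n-3)$ be a perfect square $k^{2}$, set $m:=2l+n-3$ so that $m^{2}-k^{2}=(n-3)^{2}-1=(n-4)(n-2)$, and factor this as $(m-k)(m+k)=(n-4)(n-2)$. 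For fixed $n>4$ the right-hand side is a fixed positive integer, $0<k<m$, and the two factors $m\pm k$ have equal parity; hence there are only finitely many admissible factorisations, each determining at most one $(m,k)$ and thus at most one $l$. This bounds the number of integer-$\nu$ values, and therefore the number of non-trivial charges.

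The main obstacle is this last finiteness step: everything else is bookkeeping that feeds the regularity conditions of Propositions~\ref{prop:Regularity-conditions-integer-nu} and \ref{prop:regularity-conditions-nu-non-integer} into Eq.~\eqref{charges} through Eq.~\eqref{solution-nu-p}, whereas the finiteness for $n>4$ rests on the Diophantine reduction above. The point to verify with care is the parity constraint on $m\pm k$ together with the range condition $m>n-3$, which guarantees that each admissible factorisation returns a genuine positive integer $l=(m-n+3)/2$ and ultimately rules out all but finitely many $l$.
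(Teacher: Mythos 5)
Your proposal is correct and follows the same route as the paper: the theorem there is stated as an immediate assembly of Eq.~\eqref{charges} (charge $=a_{0;\ell,\bmm}(\pm1)$), the $p=0$ evaluation in Eq.~\eqref{solution-nu-p} for the integer-$\nu$ case, and Proposition~\ref{prop:regularity-conditions-nu-non-integer}(1) for the non-integer case, exactly as you do. Where you go beyond the paper is the finiteness claim for $n>4$: the paper merely asserts in a footnote that $\nu\in\mathbb{Z}$ occurs only for finitely many $l$ when $n\neq4$, with no argument given, whereas you supply a genuine proof via the substitution $m=2l+n-3$ and the factorisation $(m-k)(m+k)=(n-4)(n-2)$ with $k^2=1+4l(l+n-3)$; since the right-hand side is a fixed positive integer for fixed $n>4$, only finitely many equal-parity factor pairs exist, each yielding at most one $l$. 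This Diophantine reduction is sound (note that $1+4l(l+n-3)$ is odd, so any square root $k$ is automatically odd and $\nu=(k-1)/2$ is automatically an integer, which slightly simplifies your parity bookkeeping), and it fills the one step the paper leaves unproved; it also makes the theorem effective, e.g.\ for $n=7$ it locates the non-trivial case $l=2$, $\nu=3$. The only cosmetic slip is writing $\nu=l\in\mathbb{Z}^{+}$ for all $l\ge0$ in the $n=4$ case ($\nu=0$ when $l=0$), but the paper commits the same abuse in Proposition~\ref{prop:Regularity-conditions-integer-nu} and your argument handles $\nu=0$ correctly through $P^{0}_{0}\equiv1$ and the vanishing of $D_{0;\ell,\bmm}$.
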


\begin{remark}
  \emph{We emphasise that our results are formal and tied to the ansatz given by Eq.~\eqref{Expansion-phi} and hence only initial data covered by Eq.~\eqref{initial-data} is considered.  Notice that in the even $n>4$ case, if the regularity conditions in Proposition \ref{prop:Regularity-conditions-integer-nu} are not imposed, the charges are not well-defined at $\mathcal{I}^{\pm}$. In other words, even within the class of initial data covered by Eq. \eqref{initial-data}, only fine-tuned initial data will lead to well-defined asymptotic charges.  In the odd $n>4$ case, the conditions for $C^{0}$-regularity of our solution near $\tau= \pm 1$ implies a trivial solution in the $p=0$ case. Consequently, there exists no non-trivial asymptotic charges with well-defined limits at $\tau = \pm 1$.}
\end{remark}

\begin{remark}
  \emph{A similar result for the spin-$0$ field was obtained for the Newman-Penrose constants in \cite{Gasperin-Pinto} in the $n=4$ case in the sense that only fine-tuned initial data lead to a finite expression for the \emph{classical Newman-Penrose constants}. However, also in \cite{Gasperin-Pinto}, it was observed that by considering a particular case of the so-called \emph{modified Newman-Penrose constants} ---see \cite{GajKeh22}, without imposing the analogous regularity condition, one gets a finite expression for these modified Newman-Penrose constants. Then, an interesting question is whether there exist analogous modified asymptotic charges for which one could get a regular result without imposing the regularity conditions. }
\end{remark}
%%%%%%%%%%%%%%%%%%%%%%%%%%%%%%%%%%%%%%%%%%%%%%%%%%%%%%%%%
\subsubsection*{Acknowledgements}
{\small The authors wish to thank Salvatore Vultaggio for valuable discussions on hypergeometric functions. 
MM would like to thank Kevin Nguyen, Félicien Comtat, Grigalius Taujanskas, and Juan A. Valiente Kroon for useful discussions while working on this project as well as David Fajman
for his hospitality. MM gratefully acknowledges support from the Erwin Schrödinger International Institute for Mathematics and Physics' (ESI) Junior Research Fellowship and the Centre for Mathematical Analysis, Geometry, and Dynamical Systems' (CAMGSD) Postdoctoral Fellowship. MM and FCM thank CAMGSD, IST-ID, projects UIDB/04459/2020 and UIDP/04459/2020. E. Gasper\'in holds an FCT (Portugal) investigator grant
2020.03845.CEECIND and benefited from the FCT Exploratory Research Project 2022.01390.PTD to fund academic visits where part of this research was conducted. FCM thanks CMAT, Univ. Minho, through FCT projects UIDB/00013/2020 and UIDP/00013/2020 and FEDER Funds COMPETE. The authors also benefited from the H2020-MSCA-2022-SE project EinsteinWaves, GA No.101131233. Some calculations in this project used the computer algebra system Mathematica with the package xAct \cite{xAct}.

\appendix
\setcounter{equation}{0}  % reset counter 
\renewcommand{\theequation}{\thesection.\arabic{equation}}

\section{Solutions to the Legendre's differential equation}
\label{Appendix:Solutions-Legendre-ODE}
Following the discussion in Chapter 3 in \cite{ErdBat06} (see also \cite{GradshteynRyzhik2014,DLMF}), consider the second order ordinary differential equation
\begin{equation}
    (1-z^2) u''(z) - 2 z u'(z) + \left( \nu (\nu+1) - \frac{\mu^2}{1-z^2} \right) u(z) =0,
    \label{differential-Equation-Appendix-A}
\end{equation}
where $u(z)$ is a function of the complex variable $z \in \mathbb{C}$, $u'(z) \equiv du(z)/dz$, $u''(z) \equiv d^{2}u(z)/dz^2$, and $\nu, \mu$ are some arbitrary complex constants. The solutions to Eq. \eqref{differential-Equation-Appendix-A} are multiple-valued functions in the $z$-plane. By making a branch cut along the real axis from $- \infty$ to $+1$, one obtains the linearly independent single-valued regular solutions of Eq. \eqref{differential-Equation-Appendix-A}, the associated Legendre functions of the first and second kind, denoted by $P^{\mu}_{\nu}(z)$ and $Q^{\mu}_{\nu}(z)$. In terms of the hypergeometric function ${}_{2}F_{1}$, $P^{\mu}_{\nu}(z)$ and $Q^{\mu}_{\nu}(z)$ can be written as  
\begin{subequations}
    \begin{eqnarray}
    && P^{\mu}_{\nu}(z) := \frac{1}{\Gamma(1-\mu)} \left(\frac{z+1}{z-1}\right)^{\frac{\mu}{2}} {}_{2}F_{1}\left( -\nu, \nu+1; 1-\mu; \tfrac{1}{2} (1-z) \right), \quad |1-z|<2, \quad \qquad \label{Assoc-Legendre-function-first-kind}\\
    && Q^{\mu}_{\nu}(z) := \frac{e^{i \pi \mu} \Gamma(\nu+\mu+1) \Gamma(\frac{1}{2})}{2^{\nu+1} \Gamma(\nu+ \frac{3}{2})} (z^2-1)^{\frac{\mu}{2}} z^{-\nu - \mu -1} \nonumber \\
    && \phantom{\mathcal{Q}^{\mu}_{\nu}(z) = \frac{e^{i \pi \mu}}{2^{\nu+1}}} \times {}_{2}F_{1}\left(\frac{\nu + \mu +2}{2},\frac{\nu + \mu +1}{2};\nu + \frac{3}{2}; \frac{1}{z^2} \right), \qquad |z|>1, 
\end{eqnarray}
\end{subequations}
where $\Gamma$ denotes the Gamma function. Different expressions of $P^{\mu}_{\nu}(z)$ and $Q^{\mu}_{\nu}(z)$ can be obtained by using transformation formulas of the hypergeometric functions --- See \cite{ErdBat06}. Note that Eq. \eqref{differential-Equation-Appendix-A} is invariant under the replacements $\mu \to - \mu$, $z \to - z$ $\nu \to - \nu - 1$. Therefore, $P^{\pm \mu}_{\nu}(\pm z), Q^{\pm \mu}_{\nu}(\pm z), P^{\pm \mu}_{-\nu -1}(\pm z)$ and $Q^{\pm \mu}_{-\nu -1}(\pm z)$ are also solutions to Eq. \eqref{differential-Equation-Appendix-A}. 

Now, consider the solutions to Eq. \eqref{differential-Equation-Appendix-A} with a real argument $\tau \in \mathbb{R} \in [-1,1]$. If $\mu=0$ and $\nu \in \mathbb{Z}^{+}$, the solutions to Eq. \eqref{differential-Equation-Appendix-A} are given by
\begin{subequations}
    \begin{eqnarray}
    && P_{\nu} (\tau) := \frac{(-1)^{\nu}}{2^\nu \nu!} \frac{d^{\nu} (1-\tau^2)^{\nu}}{d \tau^{\nu}}, \label{Definition-Legendre-polynomial} \\
    && Q_{\nu} (\tau) := \frac{1}{2} P_{\nu}(\tau) \log \left( \frac{1+\tau}{1-\tau} \right) - W_{\nu-1}(\tau), \label{Definition-Legendre-function-2}
\end{eqnarray}
\end{subequations}
where $P_{\nu}(\tau)$ is the Legendre function of the first kind, $Q_{\nu}(\tau)$ is the Legendre function of the second kind and $W_{\nu-1}$ is a polynomial of degree $\nu -1$ that satisfies the equation
\begin{equation*}
    (1-\tau^2) \frac{d^{2}W_{\nu-1}}{d \tau^2} - 2 \tau \frac{d W_{\nu-1}}{d \tau} + \nu (\nu+1) W_{\nu-1} = 2 \frac{d P_{\nu}}{d \tau}. 
\end{equation*}
In particular, it can be shown that 
\begin{equation*}
    W_{\nu-1}(\tau) = \sum_{i=0}^{\frac{\nu-1}{2}} \frac{(2\nu - 4 i -1)}{(\nu - i)(2 i +1)} P_{\nu - 2 i -1 }(\tau).
\end{equation*}
\begin{remark}
    It is clear from Eq. \eqref{Definition-Legendre-function-2} that for $\nu \in \mathbb{Z}^{+}$, $ Q_{\nu} (\tau)$ diverges logarithmically at $\tau=\pm 1$.
\end{remark}
For $\nu \in \mathbb{R}$, one can write $P_{\nu}(\tau)$ and $Q_{\nu}(\tau)$ as follows
\begin{subequations}
    \begin{eqnarray}
    && P_{\nu} (\tau) = {}_{2}F_{1}\left( 1+\nu,-\nu;1; \tfrac{1}{2} (1-\tau) \right), \label{Definition-Legendre-polynomial-noninteger-nu} \\
    && Q_{\nu} (\tau) = \frac{1}{2} P_{\nu}(\tau) \left( \log \left(\frac{1+\tau}{1-\tau}\right) - 2 \psi(1) - 2 \psi(\nu+1) \right) - \pi^{-1} \sin(\nu \pi)   \\
    && \phantom{Q_{\nu} (\tau) = \frac{1}{2} P_{\nu}(\tau) \left( \log - 2 \psi(1) - 2 \psi(\nu+1) \right)} \times \sum_{i=1}^{\infty} \frac{\Gamma(i-\nu) \Gamma(i+\nu+1) \sigma(i) (1-\tau)^i}{2^{i}(i!)^2}, \nonumber
\end{eqnarray}
\end{subequations}
where $\sigma(i) = \psi(i+1) + \psi(1)$ and $\psi(z)$ is defined as 
        \begin{equation*}
            \psi(z) = \frac{d \log \Gamma(z)}{d z}.
        \end{equation*}
For $\mu = p \in \mathbb{Z}$, the solutions $P^{p}_{\nu}(\tau)$ and $Q^{p}_{\nu}(\tau)$ can be written as 
\begin{subequations}
    \begin{eqnarray}
    && P^{p}_{\nu}(\tau) = (-1)^{p}(1-\tau^2)^{\tfrac{p}{2}} \frac{d^{p} P_{\nu}(\tau)}{d\tau^p} \equiv (-1)^{p}(1-\tau^2)^{\tfrac{p}{2}} P^{(p)}_{\nu}, \label{Definition-associated-legendre-polynomial} \\
    && Q^{p}_{\nu}(\tau) = (-1)^{p}(1-\tau^2)^{\tfrac{p}{2}} \frac{d^{p} Q_{\nu}(\tau)}{d\tau^p} \equiv (-1)^{p}(1-\tau^2)^{\tfrac{p}{2}} Q^{(p)}_{\nu}, \qquad -1 < \tau < 1, \label{Definition-associated-legendre-function-2}
\end{eqnarray}
\end{subequations}
In terms of ${}_{2}F_{1}$, we have
\begin{subequations}
    \begin{eqnarray}
    && P^{p}_{\nu}(\tau) = (-1)^{p} \frac{\Gamma(\nu+p+1)}{2^{p}\Gamma(\nu-p+1)} (1-\tau^2)^{\tfrac{p}{2}} {}_{2}F_{1}\left(\nu+p+1,p-\nu;p+1; \tfrac{1}{2} (1-\tau) \right), \qquad \quad \\
    && Q^{p}_{\nu}(\tau) = - \frac{1}{2} \pi \sin\left( \tfrac{1}{2} (p+\nu) \pi \right) \omega_{1}(p,\nu,\tau) + \frac{1}{2} \pi \cos \left( \tfrac{1}{2} (p+\nu)  \pi \right) \omega_{2}(p,\nu,\tau), 
\end{eqnarray}
\label{Solutions-nu-non-integer}
\end{subequations}
where
\begin{subequations}
    \begin{eqnarray*}
    && \omega_{1}(p,\nu,\tau) :=  \frac{2^{p}\Gamma\left(\frac{1}{2}(\nu+p+1)\right)}{\Gamma\left(\frac{1}{2}(\nu-p)+1\right)}\left(1-\tau^{2}\right)^{-\tfrac{p}{2}}{}_{2}F_{1}\left( -\tfrac{1}{2} (\nu+p), \tfrac{1}{2} (\nu-p+1);\tfrac{1}{2};\tau^{2}\right) , \\
    && \omega_{2}(p,\nu.\tau) := \frac{2^{p}\Gamma\left(\frac{1}{2}(\nu+p)+1\right)}{\Gamma\left(\frac{1}{2}(\nu-p+1)\right)}\tau\left(1-\tau^{2}\right)^{-\tfrac{p}{2}}{}_{2}F_{1}\left( -\tfrac{1}{2} (\nu+p-1),\tfrac{1}{2} (\nu-p)+1;\tfrac{3}{2};\tau^{2}\right). \qquad \quad
\end{eqnarray*}
\label{Definition-omega-12}
\end{subequations}
Finally, $Q^{p}_{\nu}(\tau)$ can be written in terms of $P^{p}_{\nu}(\tau)$ as
\begin{eqnarray*}
    && \frac{2 \pi}{\sin(\nu \pi) } Q^{p}_{\nu}(\tau) = \frac{\pi}{\sin(\nu \pi)} P^{p}_{\nu} (\tau) \left( \log \left( \frac{1+\tau}{1-\tau} \right)  - 2 \psi(1) - \psi(\nu+p+1) - \psi(\nu-p+1) \right) \\
    && \phantom{ \frac{2 \pi}{\sin(\nu \pi) } Q^{p}_{\nu}(\tau) } - e^{i p \pi} \left( \frac{1+\tau}{1-\tau} \right)^{\tfrac{p}{2}} \sum_{i=0}^{p-1} \frac{\Gamma(i-\nu) \Gamma(i+\nu+1) \Gamma(p-i) }{2^{i}i!} \cos(i \pi) (1-\tau)^{i} \nonumber \\
    && \phantom{ \frac{2 \pi}{\sin(\nu \pi) } Q^{p}_{\nu}(\tau) } - \left( \frac{1+\tau}{1-\tau} \right)^{\tfrac{p}{2}} \sum_{i=0}^{\infty} \frac{\Gamma(p+i-\nu) \Gamma(p+i+\nu+1)}{2^{p+i}i!(p+i)!} \sigma(i) (1-\tau)^{p+i} \nonumber \\
    && \phantom{ \frac{2 \pi}{\sin(\nu \pi) } Q^{p}_{\nu}(\tau) } - \frac{\Gamma(\nu+p+1)}{\Gamma(\nu-p+1)} \left( \frac{1-\tau}{1+\tau} \right)^{\tfrac{p}{2}} \sum_{i=0}^{\infty} \frac{\Gamma(i-\nu)\Gamma(i+\nu+1)}{2^{i}i! (p+i)!} \sigma(p+i) (1-\tau)^{i}. \nonumber
\end{eqnarray*}
%%%%%%%%%%%%%%%%%%%%%%%%%%%%%%%%%%%%%%%%%%%%%%%%%%%%%%%%%%%%%%%%%%%%%%%

\end{document}